\providecommand{\algorithmname}{Algorithm}
\let\oldforeign@language\foreign@language
\DeclareRobustCommand{\foreign@language}[1]{%
	\lowercase{\oldforeign@language{#1}}}
\let\oldforeign@language\foreign@language
\DeclareRobustCommand{\foreign@language}[1]{%
	\lowercase{\oldforeign@language{#1}}}
\newcommand{\MYfooter}{\smash{
		\hfil\parbox[t][\height][t]{\textwidth}{\centering
			\thepage}\hfil\hbox{}}}
\def\ps@IEEEtitlepagestyle{%
	\def\@oddhead{\parbox[t][\height][t]{\textwidth}{\centering \scriptsize
			Personal use of this material is permitted. Permission from the author(s) and/or copyright holder(s), must be obtained for all other uses. Please contact us and provide details if you believe this document breaches copyrights.\\
			\noindent\makebox[\linewidth]{}
		}\hfil\hbox{}}%
	\def\@evenhead{\scriptsize\thepage \hfil \leftmark\mbox{}}%
	\def\@oddfoot{\parbox[t][\height][l]{\textwidth}{
			\vspace{-20pt}{\rule{\textwidth}{0.4pt}}\\ \footnotesize\underline{To cite this article:}
			{\bf{\footnotesize\textcolor{red}{H. A. Hashim, M. Abouheaf, K. G. Vamvoudakis, "Neural-adaptive Stochastic Attitude Filter on SO(3)," IEEE Control Systems Letters, vol. 6, no. 1, pp. 1549-1554, 2022.}}} doi: \href{https://doi.org/10.1109/LCSYS.2021.3123227}{10.1109/LCSYS.2021.3123227}\\
			\noindent\makebox[\linewidth]
		}\hfil\hbox{}}%
	\def\@evenfoot{\MYfooter}}
\newtheorem{defn}{Definition}
\newtheorem{lem}{Lemma}
\newtheorem{thm}{Theorem}
\newtheorem{assum}{Assumption}
\begin{document}
	\bstctlcite{IEEEexample:BSTcontrol}

	\title{Neural-adaptive Stochastic Attitude Filter on SO(3)}

\author{Hashim A. Hashim, Mohammed Abouheaf, and Kyriakos G. Vamvoudakis
		\thanks{This work was supported in part by the Dean’s Office and the Department of Mechanical and Aerospace Engineering through the Carleton University Startup Research Grant, and in part by the NSF under grant Nos. CAREER CPS-1851588 and S\&AS 1849198.}
		\thanks{H. A. Hashim is with the Department of Mechanical and Aerospace Engineering, Carleton University, Ottawa, ON, K1S 5B6, Canada (e-mail: HashimMohamed@cunet.carleton.ca), M. Abouheaf is with College of Technology, Architecture \& Applied Engineering, Bowling Green State University, Bowling Green, 43402, OH, USA, (email: mabouhe@bgsu.edu), and K. G. Vamvoudakis is with the Daniel Guggenheim School of Aerospace Engineering, Georgia Institute of Technology, Atlanta, GA, 30332, USA (e-mail: kyriakos@gatech.edu).}
}



\maketitle

\begin{abstract}
Successful control of a rigid-body rotating in three dimensional space
requires accurate estimation of its attitude. The attitude dynamics
are highly nonlinear and are posed on the Special Orthogonal Group
$SO(3)$. In addition, measurements supplied by low-cost sensing units
pose a challenge for the estimation process. This paper proposes a
novel stochastic nonlinear neural-adaptive-based filter on $SO(3)$
for the attitude estimation problem. The proposed filter produces
good results given measurements extracted from low-cost sensing units (e.g., IMU or MARG sensor modules). The filter is guaranteed to be almost semi-globally uniformly ultimately
bounded in the mean square. In addition to Lie Group formulation,
quaternion representation of the proposed filter is provided. The
effectiveness of the proposed neural-adaptive filter is tested and
evaluated in its discrete form under the conditions of large initialization
error and high measurement uncertainties.
\end{abstract}

\begin{IEEEkeywords}
		Neuro-adaptive, stochastic differential equations (SDEs), Brownian motion process, attitude estimator, Special Orthogonal Group, Unit-quaternion, SO(3), IMU, MARG.
\end{IEEEkeywords}

\IEEEpeerreviewmaketitle{}

\section{Introduction}

	\IEEEPARstart{R}{obotics} and control applications are heavily reliant on robust filtering
	solutions to guarantee feasibility of accurate rigid-body orientation
	(attitude) estimation \cite{markley1988attitude,markley2003attitude,crassidis2003unscented,zlotnik2016exponential}.
	The attitude can be reconstructed algebraically given known observations
	in the inertial-frame and the associated measurements in the body-frame.
	Examples include QUEST algorithm \cite{shuster1981three} and singular
	value decomposition (SVD) \cite{markley1988attitude}. However, body-frame
	measurements might be attached with uncertainties, in particular if
	they were supplied by low-cost inertial measurement units (IMUs) or
	magnetic, angular rate, and gravity (MARG) sensor. Hence, accounting
	for measurement imperfections requires substituting algebraic attitude
	reconstruction with estimation filters.
	
	The problem of attitude estimation is traditionally tackled by the
	active control and robotics research community using Gaussian filters,
	such as, Kalman filter (KF) \cite{choukroun2006novel}, extended Kalman
	filter (EKF) \cite{madyastha2011extended}, multiplicative extended
	Kalman filter (MEKF) \cite{markley2003attitude}, unscented Kalman
	filter (UKF) \cite{crassidis2003unscented}, and invariant extended
	Kalman filter (IEKF) \cite{bonnabel2007left}. The unit-quaternion
	structure of the majority of Gaussian filters offers the benefit of
	nonsingular attitude representation \cite{hashim2018SO3Stochastic,hashim2019SO3Wiley}. However,
	on the other hand, unit-quaternion formulation is subject to nonuniqueness
	\cite{shuster1993survey,hashim2019AtiitudeSurvey}. This motivated the researchers to explore
	posing the attitude on the Special Orthogonal Group $\mathbb{SO}(3)$.
	Unlike unit-quaternion, $\mathbb{SO}(3)$ offers unique and global
	representation of the rotational matrix \cite{zlotnik2016exponential,hashim2019SO3Det,mahony2008nonlinear,lee2018bayesian,hashim2018SO3Stochastic,hashim2019SO3Wiley}.
	Therefore, over the last decade multiple nonlinear attitude filters
	on $\mathbb{SO}(3)$ have been proposed, such as nonlinear deterministic
	filters \cite{zlotnik2016exponential,hashim2019SO3Det,mahony2008nonlinear,lee2018bayesian}
	and nonlinear stochastic filters \cite{hashim2018SO3Stochastic,hashim2019SO3Wiley}.
	The nonlinear filter design on $\mathbb{SO}(3)$ has proven to 1)
	have a simpler structure, 2) be computationally cheap, and 3) have
	better tracking performance in contrast to Gaussian filters \cite{zlotnik2016exponential,hashim2019SO3Det,mahony2008nonlinear,lee2018bayesian,hashim2018SO3Stochastic,hashim2019SO3Wiley}.
	
	It is widely known that neural networks (NNs) have capability to learn
	complex nonlinear relationships \cite{zhao2018neuroadaptive,wang2016fraction,song2017indirect,song2018neuroadaptive}.
	In the recent years, adaptive artificial neural networks (ANNs) learning,
	known as neural-adaptive learning, has been found effective for approximating
	unknown nonlinear dynamics online in several control applications.
	Examples include two-degrees-of-freedom arm robots \cite{zhao2018neuroadaptive},
	multi-agent systems \cite{wang2016fraction}, unknown multi-input
	multi-output systems \cite{song2017indirect} and fault-tolerant control
	\cite{song2018neuroadaptive}. Accurate NN approximation of unknown
	nonlinear dynamics allows for successful control process \cite{zhao2018neuroadaptive,wang2016fraction,song2017indirect,song2018neuroadaptive}.
	In this work, the attitude dynamics are modelled on the Lie Group
	of $\mathbb{SO}(3)$. The uncertainties inherent to attitude dynamics
	and gyroscope measurements, are addressed using Brownian motion process.
	The contributions of this paper are as follows: 1) a neural-adaptive
	nonlinear stochastic attitude filter on $\mathbb{SO}(3)$ is proposed,
	2) the measurement uncertainties are corrected using neural-adaptive
	adaptation mechanisms extracted by adopting Lyapunov stability, and
	3) the closed loop signals are guaranteed to be almost semi-globally
	uniformly ultimately bounded (SGUUB). While the filter is proposed
	in a continuous form, its discrete form obtained using exact integration
	methods is also presented. The filter is tested at a low sampling
	rate to reflect real-life applications. To the best of the authors
	knowledge, the attitude estimation problem has not been addressed
	using a neural-adaptive stochastic filter on $\mathbb{SO}(3)$.
	
	The paper is structured to include six Sections. Section \ref{sec:Preliminaries}
	presents preliminaries of the attitude problem. Section \ref{sec:SE3_Problem-Formulation}
	defines the problem, contains the available measurements, error criteria,
	and neural network approximation. Section \ref{sec:Neuro-adaptive-based-Stochastic-}
	presents a novel neural-adaptive stochastic attitude filter. Section
	\ref{sec:SE3_Simulations} shows and discusses the obtained results.
	Lastly, Section \ref{sec:SE3_Conclusion} concludes the paper.
	
	\section{Preliminaries\label{sec:Preliminaries}}
	
	In this work, $\mathbb{R}$ represents the set of real numbers, $\mathbb{R}_{+}$
	denotes the set of nonnegative real numbers, and $\mathbb{R}^{n\times m}$
	stands for a real $n$-by-$m$ dimensional space. $\mathbf{I}_{n}$
	and $0_{n\times m}$ denote an $n$-by-$n$ identity matrix and an
	$n$-by-$m$ dimensional matrix of zeros, respectively. For $a\in\mathbb{R}^{n}$
	and $A\in\mathbb{R}^{n\times m}$, $||a||=\sqrt{a^{\top}a}$ stands
	for Euclidean norm of $x$ and $||A||_{F}=\sqrt{{\rm Tr}\{AA^{*}\}}$
	describes the Frobenius norm of $A$ where $*$ denotes a conjugate
	transpose. For $A\in\mathbb{R}^{n\times n}$, define a set of eigenvalues
	as $\lambda(A)=\{\lambda_{1},\lambda_{2},\ldots,\lambda_{n}\}$ where
	$\overline{\lambda}_{A}=\overline{\lambda}(A)$ denotes the maximum
	value, while $\underline{\lambda}_{A}=\underline{\lambda}(A)$ describes
	the minimum value of $\lambda(A)$. $\left\{ \mathcal{I}\right\} $
	defines a fixed inertial-frame and $\left\{ \mathcal{B}\right\} $
	describes a fixed body-frame. Rigid-body's orientation in three-dimensional
	space, commonly known as attitude, is expressed as $R\in\mathbb{SO}(3)$
	with
	\[
	\mathbb{SO}(3)=\{R\in\mathbb{R}^{3\times3}|R^{\top}R=\mathbf{I}_{3}\text{, }{\rm det}(R)=+1\}
	\]
	where ${\rm det}(\cdot)$ denotes a determinant. The Lie algebra associated
	with $\mathbb{SO}(3)$ is termed $\mathfrak{so}(3)$ and can be described
	as
	\begin{align*}
	\mathfrak{so}(3) & =\{[a]_{\times}\in\mathbb{R}^{3\times3}|[a]_{\times}^{\top}=-[a]_{\times},a\in\mathbb{R}^{3}\}\\
	\left[a\right]_{\times} & =\left[\begin{array}{ccc}
	0 & -a_{3} & a_{2}\\
	a_{3} & 0 & -a_{1}\\
	-a_{2} & a_{1} & 0
	\end{array}\right]\in\mathfrak{so}\left(3\right),\hspace{1em}a=\left[\begin{array}{c}
	a_{1}\\
	a_{2}\\
	a_{3}
	\end{array}\right]
	\end{align*}
	The operator $\mathbf{vex}$ stands for the inverse mapping of $[\cdot]_{\times}$
	with the map $\mathbf{vex}:\mathfrak{so}(3)\rightarrow\mathbb{R}^{3}$
	where $\mathbf{vex}([a]_{\times})=a,\forall a\in\mathbb{R}^{3}$.
	The anti-symmetric projection has the map $\boldsymbol{\mathcal{P}}_{a}:\mathbb{R}^{3\times3}\rightarrow\mathfrak{so}(3)$
	where
	\[
	\boldsymbol{\mathcal{P}}_{a}(M)=\frac{1}{2}(M-M^{\top})\in\mathfrak{so}\left(3\right),\forall M\in\mathbb{R}^{3\times3}
	\]
	For $M=[m_{i,j}]_{i,j=1,2,3}\in\mathbb{R}^{3\times3}$, let us define
	\begin{equation}
	\boldsymbol{\Upsilon}(M)=\mathbf{vex}(\boldsymbol{\mathcal{P}}_{a}(M))=\frac{1}{2}\left[\begin{array}{c}
	m_{32}-m_{23}\\
	m_{13}-m_{31}\\
	m_{21}-m_{12}
	\end{array}\right]\in\mathbb{R}^{3}\label{eq:Attit_VEX}
	\end{equation}
	For $R\in\mathbb{SO}\left(3\right)$, define the Euclidean distance
	of $R$ as follows:
	\begin{equation}
	||R||_{{\rm I}}=\frac{1}{4}{\rm Tr}\{\mathbf{I}_{3}-R\}\in\left[0,1\right]\label{eq:Attit_Ecul_Dist}
	\end{equation}
	with ${\rm Tr}\{\cdot\}$ standing for a trace of a matrix. For $A\in\mathbb{R}^{3\times3}$
	and $\alpha\in\mathbb{R}^{3}$, considering the composition mapping
	in \eqref{eq:Attit_VEX}, let us introduce the following identity:
	\begin{equation}
	{\rm Tr}\{A[\alpha]_{\times}\}={\rm Tr}\{\boldsymbol{\mathcal{P}}_{a}(A)[\alpha]_{\times}\}=-2\boldsymbol{\Upsilon}(A)^{\top}\alpha\label{eq:Attit_Identity1}
	\end{equation}

	\section{Problem Formulation\label{sec:SE3_Problem-Formulation}}
	
	\subsection{Measurements and Dynamics}
	
	Let $R\in\mathbb{SO}(3)$ be the attitude of a rigid-body in three-dimensional
	space defined with respect to $\left\{ \mathcal{B}\right\} $. The
	true attitude dynamics:
	\begin{equation}
	\dot{R}=R\left[\Omega\right]_{\times}\label{eq:Attit_R_dot}
	\end{equation}
	where $\Omega\in\mathbb{R}^{3}$ represents angular velocity of the
	rigid-body defined with respect to $\left\{ \mathcal{B}\right\} $.
	The attitude of a rigid-body can be obtained given a group of measurements
	in $\left\{ \mathcal{B}\right\} $ and a group of observations in
	$\left\{ \mathcal{I}\right\} $. Let $r_{i}\in\mathbb{R}^{3}$ denote
	an observation in $\left\{ \mathcal{I}\right\} $. As such, the measurement
	of $r_{i}$ with respect to $\left\{ \mathcal{B}\right\} $ is given
	by
	\begin{align}
	y_{i} & =R^{\top}r_{i}+n_{i}\in\mathbb{R}^{3},\hspace{1em}\forall i=1,2,\ldots,N\label{eq:Attit_Vec_yi}
	\end{align}
	where $n_{i}$ denotes unknown noise. The attitude can be obtained
	given two or more non-collinear inertial observations ($N\geq2$)
	and the respective body-frame measurements. If $N=2$, the third observation
	and the associated measurement can be defined by $r_{3}=r_{2}\times r_{1}$
	and $y_{3}=y_{2}\times y_{1}$ where $\times$ denotes a cross product.
	The set of observations and measurements can be normalized as follows:
	\begin{equation}
	{\bf r}_{i}=\frac{r_{i}}{||r_{i}||},\hspace{1em}{\bf y}_{i}=\frac{y_{i}}{||y_{i}||}\label{eq:Attit_Vec_norm}
	\end{equation}
	Low-cost IMU or MARG sensors can be utilized for attitude determination
	or estimation, see \cite{zlotnik2016exponential,hashim2019SO3Det,mahony2008nonlinear,lee2018bayesian,hashim2018SO3Stochastic,hashim2019SO3Wiley}.
	Gyroscope (angular rate or angular velocity) measurements can be defined
	as follows:
	\begin{equation}
	\Omega_{m}=\Omega+n\in\mathbb{R}^{3}\label{eq:Attit_Om_m}
	\end{equation}
	with $\Omega$ being the true angular velocity defined in \eqref{eq:Attit_R_dot},
	and $n$ being unknown noise corrupting $\Omega_{m}$. %
	The noise vector $n$ is bounded and Gaussian with a zero mean $\mathbb{E}[n]=0$
	where $\mathbb{E}[\cdot]$ denotes expected value of a component.
	Derivative of a Gaussian process results in a Gaussian process \cite{Hashim2021AESCTE,khasminskii1980stochastic}.
	As such, $n$ can be formulated as a Brownian motion process
	\begin{equation}
	n=\mathcal{Q}\frac{d\beta}{dt}\label{eq:Attit_n_beta}
	\end{equation}
	where $\beta\in\mathbb{R}^{3}$ and $\mathcal{Q}\in\mathbb{R}^{3\times3}$
	is an unknown time-variant symmetric matrix with $\mathcal{Q}^{2}=\mathcal{Q}\mathcal{Q}^{\top}$
	being the noise covariance. It is worth noting that $\mathbb{P}\{\beta(0)=0\}=1$
	and $\mathbb{E}[\beta]=0$ where $\mathbb{P}\{\cdot\}$ denotes probability
	of a component. Therefore, from \eqref{eq:Attit_R_dot}, \eqref{eq:Attit_Om_m},
	and \eqref{eq:Attit_n_beta}, the true attitude dynamics can be defined
	in a stochastic sense as follows:
	\begin{equation}
	dR=R[\Omega_{m}]_{\times}dt-R[\mathcal{Q}d\beta]_{\times}\label{eq:Attit_dR_dt}
	\end{equation}
	In view of \eqref{eq:Attit_VEX}-\eqref{eq:Attit_Identity1}, one
	obtains the normalized Euclidean distance of $R$ in \eqref{eq:Attit_dR_dt}
	as follows:
	\begin{equation}
	d||R||_{{\rm I}}=2\boldsymbol{\Upsilon}(R)^{\top}\Omega_{m}dt-2\boldsymbol{\Upsilon}(R)^{\top}\mathcal{Q}d\beta\label{eq:Attit_dR_norm}
	\end{equation}
	
	\begin{lem}
		\label{lem:Lem1}\cite{hashim2019SO3Wiley} Let $R\in\mathbb{SO}(3)$,
		$\boldsymbol{\Upsilon}(R)=\mathbf{vex}(\boldsymbol{\mathcal{P}}_{a}(R))$
		as in \eqref{eq:Attit_VEX}, and $||R||_{{\rm I}}=\frac{1}{4}{\rm Tr}\{\mathbf{I}_{3}-R\}$
		as \eqref{eq:Attit_Ecul_Dist}. Hence, the following equality holds:
		\[
		||\boldsymbol{\Upsilon}(R)||^{2}=4(1-||R||_{{\rm I}})||R||_{{\rm I}}
		\]
	\end{lem}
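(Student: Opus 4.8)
The plan is to parametrize $R$ via the axis--angle (Rodrigues) representation and reduce both sides to a single trigonometric quantity. By Euler's rotation theorem every $R\in\mathbb{SO}(3)$ is a rotation by some angle $\theta$ about a unit axis $u\in\mathbb{R}^{3}$, $||u||=1$, and Rodrigues' formula gives $R=\mathbf{I}_{3}+\sin\theta\,[u]_{\times}+(1-\cos\theta)[u]_{\times}^{2}$. I will use this expansion to evaluate the two ingredients $\boldsymbol{\Upsilon}(R)$ and $||R||_{{\rm I}}$ separately.

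First I would compute the trace. Since ${\rm Tr}\{[u]_{\times}\}=0$ and $[u]_{\times}^{2}=uu^{\top}-||u||^{2}\mathbf{I}_{3}$ gives ${\rm Tr}\{[u]_{\times}^{2}\}=||u||^{2}-3=-2$, the Rodrigues expansion yields ${\rm Tr}\{R\}=3-2(1-\cos\theta)=1+2\cos\theta$, hence $||R||_{{\rm I}}=\frac{1}{4}{\rm Tr}\{\mathbf{I}_{3}-R\}=\frac{1}{2}(1-\cos\theta)$. Next I would isolate the anti-symmetric part: because both $[u]_{\times}^{2}$ and $\mathbf{I}_{3}$ are symmetric, only the $\sin\theta\,[u]_{\times}$ term survives under $\boldsymbol{\mathcal{P}}_{a}$, so $\boldsymbol{\mathcal{P}}_{a}(R)=\sin\theta\,[u]_{\times}$ and therefore $\boldsymbol{\Upsilon}(R)=\mathbf{vex}(\boldsymbol{\mathcal{P}}_{a}(R))=\sin\theta\,u$, giving $||\boldsymbol{\Upsilon}(R)||^{2}=\sin^{2}\theta$ since $u$ is a unit vector.

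With both quantities in hand, the conclusion is a short algebraic check. Substituting $||R||_{{\rm I}}=\frac{1}{2}(1-\cos\theta)$ into the right-hand side gives $4\bigl(1-||R||_{{\rm I}}\bigr)||R||_{{\rm I}}=4\cdot\frac{1}{2}(1+\cos\theta)\cdot\frac{1}{2}(1-\cos\theta)=1-\cos^{2}\theta=\sin^{2}\theta$, which matches $||\boldsymbol{\Upsilon}(R)||^{2}$ and closes the proof.

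I do not expect a serious obstacle here; the only point requiring care is the trace and projection bookkeeping in the Rodrigues expansion, in particular recognizing that the quadratic term $[u]_{\times}^{2}$ is symmetric (so it contributes to ${\rm Tr}\{R\}$ but vanishes under the anti-symmetric projection) and that ${\rm Tr}\{[u]_{\times}^{2}\}=-2$. As an alternative free of the axis--angle picture, one could instead note that $||\boldsymbol{\Upsilon}(R)||^{2}=\frac{1}{2}||\boldsymbol{\mathcal{P}}_{a}(R)||_{F}^{2}=\frac{1}{8}||R-R^{\top}||_{F}^{2}$, expand $||R-R^{\top}||_{F}^{2}=6-2\,{\rm Tr}\{R^{2}\}$ using $R^{\top}R=\mathbf{I}_{3}$, and invoke the $\mathbb{SO}(3)$ identity ${\rm Tr}\{R^{2}\}=({\rm Tr}\{R\})^{2}-2\,{\rm Tr}\{R\}$ (a consequence of the characteristic polynomial together with $\det R=1$); writing both sides as functions of ${\rm Tr}\{R\}$ then gives the same equality.
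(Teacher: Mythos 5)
Your proof is correct. One structural point first: this paper never proves the lemma itself — it is imported with a citation to \cite{hashim2019SO3Wiley} — so there is no internal argument to compare against; your write-up is in effect supplying the proof the paper omits. Every step of your main route checks out: Rodrigues' formula with ${\rm Tr}\{[u]_{\times}\}=0$ and ${\rm Tr}\{[u]_{\times}^{2}\}=-2$ gives $||R||_{{\rm I}}=\frac{1}{2}(1-\cos\theta)$; since $\mathbf{I}_{3}$ and $[u]_{\times}^{2}=uu^{\top}-\mathbf{I}_{3}$ are symmetric, only $\sin\theta\,[u]_{\times}$ survives the anti-symmetric projection, so $||\boldsymbol{\Upsilon}(R)||^{2}=\sin^{2}\theta$; and the right-hand side collapses to $(1+\cos\theta)(1-\cos\theta)=\sin^{2}\theta$. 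This angle--axis computation is the standard way this identity is established in the attitude-estimation literature, so you are not diverging from the source in any essential way. Your coordinate-free alternative is also sound: $||\boldsymbol{\Upsilon}(R)||^{2}=\frac{1}{8}||R-R^{\top}||_{F}^{2}$ holds because $||[a]_{\times}||_{F}^{2}=2||a||^{2}$, the expansion $||R-R^{\top}||_{F}^{2}=6-2\,{\rm Tr}\{R^{2}\}$ uses only $R^{\top}R=\mathbf{I}_{3}$, and the identity ${\rm Tr}\{R^{2}\}=({\rm Tr}\{R\})^{2}-2\,{\rm Tr}\{R\}$ follows from the characteristic polynomial since the second invariant equals ${\rm Tr}\{{\rm adj}(R)\}={\rm Tr}\{R^{\top}\det R\}={\rm Tr}\{R\}$; substituting ${\rm Tr}\{R\}=3-4||R||_{{\rm I}}$ then reproduces $4(1-||R||_{{\rm I}})||R||_{{\rm I}}$. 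The second route buys independence from Euler's rotation theorem, at the cost of slightly heavier trace bookkeeping; either one is a complete proof.
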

	\begin{defn}
		\label{def:Def_SGUUB}\cite{hashim2019SO3Wiley,ji2006adaptive} Consider the stochastic
		attitude dynamics in \eqref{eq:Attit_dR_norm} and let $t_{0}$ be
		the initial time. $||R||_{{\rm I}}=||R(t)||_{{\rm I}}$ is said to
		be almost SGUUB if for a given set $\pi\in\mathbb{R}$ and $||R(t_{0})||_{{\rm I}}$
		a constant $\alpha>0$ exists and a time constant $T_{\alpha}=T_{\alpha}(\kappa,||R(t_{0})||_{{\rm I}})$
		such that $\mathbb{E}[||R(t_{0})||_{{\rm I}}]<\alpha,\forall t>t_{0}+\alpha$.
	\end{defn}
	\begin{lem}
		\label{Lemm:Def_LV_dot}\cite{deng2001stabilization} Recall the stochastic
		attitude dynamics in \eqref{eq:Attit_dR_norm} and assume that $V(||R||_{{\rm I}})$
		be a twice differentiable potential function such that
		\begin{equation}
		\mathcal{L}V(||R||_{{\rm I}})=V_{1}^{\top}f+\frac{1}{2}{\rm Tr}\{g\mathcal{Q}^{2}g^{\top}V_{2}\}\label{eq:Attit_Vfunction_Lyap0}
		\end{equation}
		with $f=2\boldsymbol{\Upsilon}(R)^{\top}\Omega_{m}\in\mathbb{R}$,
		$g=-2\boldsymbol{\Upsilon}(R)^{\top}\in\mathbb{R}^{1\times3}$, $\mathcal{L}V(||R||_{{\rm I}})$
		being a differential operator, $V_{1}=\partial V/\partial||R||_{{\rm I}}$,
		and $V_{2}=\partial^{2}V/\partial||R||_{{\rm I}}^{2}$. Let $\underline{\alpha}_{1}(\cdot)$
		and $\overline{\alpha}_{2}(\cdot)$ be class $\mathcal{K}_{\infty}$
		functions, and assume that the constants $\beta>0$ and $\eta\geq0$
		such that
		\begin{align}
		& \hspace{1em}\underline{\alpha}_{1}(||R||_{{\rm I}})\leq V(||R||_{{\rm I}})\leq\overline{\alpha}_{2}(||R||_{{\rm I}})\label{eq:Attit_Vfunction_Lyap}\\
		\mathcal{L}V(||R||_{{\rm I}}) & =V_{1}^{\top}f+\frac{1}{2}{\rm Tr}\{g\mathcal{Q}^{2}g^{\top}V_{2}\}\nonumber \\
		& \leq-\beta V(||R||_{{\rm I}})+\eta\label{eq:Attit_dVfunction_Lyap}
		\end{align}
		Hence, the stochastic attitude dynamics in \eqref{eq:Attit_dR_norm}
		have an almost unique strong solution on $[0,\infty)$. Moreover,
		the solution $||R||_{{\rm I}}$ is upper bounded in probability with
		\begin{equation}
		\mathbb{E}[V(||R||_{{\rm I}})]\leq V(||R(0)||_{{\rm I}}){\rm exp}(-\beta t)+\eta/\beta\label{eq:Attit_EVfunction_Lyap}
		\end{equation}
		Also, \eqref{eq:Attit_EVfunction_Lyap} implies that $||R||_{{\rm I}}$
		is SGUUBin the mean square.
	\end{lem}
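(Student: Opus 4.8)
The plan is to treat this as an application of the stochastic Lyapunov comparison framework of \cite{deng2001stabilization}, specialized to the scalar diffusion $||R||_{{\rm I}}$ governed by \eqref{eq:Attit_dR_norm}. The three assertions---existence and uniqueness of a strong solution, the exponential-in-expectation bound \eqref{eq:Attit_EVfunction_Lyap}, and the SGUUB property---would be established in that order, the second being the analytic core.

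First, for existence and uniqueness I would exploit the fact that, by \eqref{eq:Attit_Ecul_Dist}, $||R||_{{\rm I}}\in[0,1]$, so the state lives on a compact set on which the drift $f=2\boldsymbol{\Upsilon}(R)^{\top}\Omega_{m}$ and diffusion $g=-2\boldsymbol{\Upsilon}(R)^{\top}$ of \eqref{eq:Attit_dR_norm} are smooth, hence locally Lipschitz and bounded. Combined with the Lyapunov inequality \eqref{eq:Attit_dVfunction_Lyap}, which plays the role of a Khasminskii-type non-explosion condition, standard SDE theory then yields an almost surely unique strong solution that does not escape in finite time and is therefore defined on $[0,\infty)$.

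For the main bound I would apply It\^o's formula to the auxiliary process $e^{\beta t}V(||R||_{{\rm I}})$. Using the generator in \eqref{eq:Attit_Vfunction_Lyap0}, this gives
\begin{equation*}
d\!\left(e^{\beta t}V\right)=e^{\beta t}\!\left(\beta V+\mathcal{L}V\right)dt+e^{\beta t}V_{1}g\,\mathcal{Q}\,d\beta.
\end{equation*}
Substituting the hypothesis $\mathcal{L}V\leq-\beta V+\eta$ collapses the drift to at most $e^{\beta t}\eta\,dt$. Taking expectations, and using that the stochastic integral is a zero-mean martingale---justified because $V_{1}$, $g$, and $\mathcal{Q}$ are bounded on the compact state set so the integrand lies in the requisite $L^{2}$ space---produces
\begin{equation*}
\mathbb{E}\!\left[e^{\beta t}V(t)\right]-V(0)\leq\int_{0}^{t}e^{\beta s}\eta\,ds=\frac{\eta}{\beta}\!\left(e^{\beta t}-1\right).
\end{equation*}
Dividing by $e^{\beta t}$ and discarding the nonpositive term $-\tfrac{\eta}{\beta}e^{-\beta t}$ yields exactly \eqref{eq:Attit_EVfunction_Lyap}.

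Finally, the SGUUB conclusion follows by feeding \eqref{eq:Attit_EVfunction_Lyap} into the class $\mathcal{K}_{\infty}$ sandwich \eqref{eq:Attit_Vfunction_Lyap}: since $\underline{\alpha}_{1}(||R||_{{\rm I}})\leq V$, one gets $\mathbb{E}[\underline{\alpha}_{1}(||R||_{{\rm I}})]\leq V(||R(0)||_{{\rm I}})e^{-\beta t}+\eta/\beta$, so the mean settles into an arbitrarily small neighborhood of $\eta/\beta$ after a finite time $T_{\alpha}$, which is precisely almost SGUUB in the mean square per Definition \ref{def:Def_SGUUB}. I expect the main obstacle to be the rigorous justification of the two analytic steps in the middle paragraph---the validity of It\^o's formula for the merely twice-differentiable $V$ and the vanishing of the martingale term---but both are resolved by the compactness of the range of $||R||_{{\rm I}}$, which bounds all the relevant coefficients and derivatives.
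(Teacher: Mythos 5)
The paper does not actually prove this lemma: it is imported from the cited reference \cite{deng2001stabilization} (it is essentially the stochastic Lyapunov theorem of that work, rooted in Khasminskii's stability theory), so there is no in-paper argument to compare against. Your proposal is a correct reconstruction of the standard proof of that cited result: It\^o's formula applied to $e^{\beta t}V$, collapse of the drift via $\mathcal{L}V\leq-\beta V+\eta$, vanishing of the expected stochastic integral, integration, and the class-$\mathcal{K}_{\infty}$ sandwich \eqref{eq:Attit_Vfunction_Lyap} to pass from $V$ back to the state. Your compactness observation is also the right way to discharge the two technical obligations you flag: since $\mathbb{SO}(3)$ is compact and $||R||_{{\rm I}}\in[0,1]$, the coefficients and the derivatives $V_{1}$, $V_{2}$ are bounded along solutions, so the stopping-time localization that the general theorem requires becomes vacuous and the stochastic integral is a genuine zero-mean martingale. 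Two refinements are worth making, neither of which is a gap. First, \eqref{eq:Attit_dR_norm} is not an autonomous scalar SDE: its coefficients $f$ and $g$ depend on $\boldsymbol{\Upsilon}(R)$ and the exogenous signal $\Omega_{m}$, not on $||R||_{{\rm I}}$ alone (Lemma \ref{lem:Lem1} pins down only the norm of $\boldsymbol{\Upsilon}(R)$, not its direction), so existence and uniqueness should be argued for the underlying matrix-valued SDE \eqref{eq:Attit_dR_dt} on the compact manifold $\mathbb{SO}(3)$, whose coefficients are smooth and bounded, with \eqref{eq:Attit_dR_norm} then obtained as a derived output via It\^o's rule; your compactness reasoning transfers unchanged. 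Second, a notational point inherited from the paper itself: $\beta$ denotes both the decay constant and the Brownian motion, so your display $d(e^{\beta t}V)=e^{\beta t}(\beta V+\mathcal{L}V)dt+e^{\beta t}V_{1}g\mathcal{Q}\,d\beta$ uses the symbol in two senses on one line; one of them should be renamed.
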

	Define $\hat{R}$ as the estimate of $R$. Define the error in estimation
	by
	\begin{equation}
	\tilde{R}=R^{\top}\hat{R}\label{eq:Attit_Re}
	\end{equation}

	\subsection{Filter Structure and Error Dynamics}
	
	Define the filter dynamics as follows:
	\begin{equation}
	\dot{\hat{R}}=\hat{R}[\Omega_{m}-C]_{\times}\label{eq:Attit_Rest_dot}
	\end{equation}
	with $C\in\mathbb{R}^{3\times1}$ being a neural-adaptive-based correction
	matrix to be designed in the subsequent Section. From \eqref{eq:Attit_R_dot}
	and \eqref{eq:Attit_Rest_dot}, the error dynamics are as follows:
	\begin{align}
	d\tilde{R} & =R^{\top}d\hat{R}+dR^{\top}\hat{R}\nonumber \\
	& =(\tilde{R}[\Omega-C]_{\times}+[\Omega]_{\times}^{\top}\tilde{R})dt+\tilde{R}[\mathcal{Q}d\beta]_{\times}\nonumber \\
	& =\tilde{R}[\Omega]_{\times}-[\Omega]_{\times}\tilde{R}-\tilde{R}[C]_{\times}dt+\tilde{R}[\mathcal{Q}d\beta]_{\times}\label{eq:Attit_dRe}
	\end{align}
	In view of \eqref{eq:Attit_Identity1} and \eqref{eq:Attit_dRe},
	one obtains the Euclidean distance of \eqref{eq:Attit_dRe} as below:
	\begin{align}
	& d||\tilde{R}||_{{\rm I}}=d\frac{1}{4}{\rm Tr}\{\mathbf{I}_{3}-\tilde{R}\}=-\frac{1}{4}{\rm Tr}\{d\tilde{R}\}\nonumber \\
	& \hspace{2em}=\frac{1}{4}{\rm Tr}\{\tilde{R}[Cdt-\mathcal{Q}d\beta]_{\times}\}-\frac{1}{4}{\rm Tr}\{\tilde{R}[\Omega]_{\times}-[\Omega]_{\times}\tilde{R}\}\nonumber \\
	& \hspace{2em}=\frac{1}{4}{\rm Tr}\{\mathcal{P}_{a}(\tilde{R})[Cdt-\mathcal{Q}d\beta]_{\times}\}\nonumber \\
	& \hspace{2em}=-\frac{1}{2}\boldsymbol{\Upsilon}(\tilde{R})^{\top}Cdt+\frac{1}{2}\boldsymbol{\Upsilon}(\tilde{R})^{\top}\mathcal{Q}d\beta\label{eq:Attit_dRe_norm}
	\end{align}
	where ${\rm Tr}\{\tilde{R}[\Omega]_{\times}-[\Omega]_{\times}\tilde{R}\}=0$. 
	
	\subsection{Neural Network Structure}
	
	In this work, NNs with a linear in parameter structure will be employed.
	For $x\in\mathbb{R}^{n}$ and a function $f(x)\in\mathbb{R}^{m}$,
	one has
	\[
	f(x)=W^{\top}\varphi(x)+\alpha_{f}
	\]
	where $W\in\mathbb{R}^{q\times m}$ denotes a $q$-by-$m$-dimensional
	matrix of synaptic weights, $\varphi(x)\in\mathbb{R}^{q}$ denotes
	an activation function, $q$ denotes number of neurons, and $\alpha_{f}\in\mathbb{R}^{m}$
	denotes an approximated error vector. The activation function may
	contain high order connections, for instance, Gaussian functions \cite{gundogdu2016multiplicative},
	radial basis functions (RBFs) \cite{she2019battery}, sigmoid functions
	\cite{elfwing2018sigmoid}. Our objectives are to achieve accurate
	estimation of the attitude matrix, estimate the nonlinear attitude
	dynamics, and compensate for the uncertainties. NNs have been proven
	to be successful in estimating high-order nonlinear dynamics \cite{zhao2018neuroadaptive,wang2016fraction,song2017indirect,song2018neuroadaptive}.
	Recall the nonlinear dynamics in \eqref{eq:Attit_dRe_norm}
	\[
	d||\tilde{R}||_{{\rm I}}=-\frac{1}{2}\boldsymbol{\Upsilon}(\tilde{R})^{\top}Cdt+\frac{1}{2}\boldsymbol{\Upsilon}(\tilde{R})^{\top}\mathcal{Q}d\beta
	\]
	Define $\varphi(\boldsymbol{\Upsilon}(\tilde{R}))$ as an activation
	function, and let us approximate
	\begin{align*}
	C^{\top}\boldsymbol{\Upsilon}(\tilde{R}) & =C^{\top}\Gamma_{c}^{\top}\varphi(\boldsymbol{\Upsilon}(\tilde{R}))+\alpha_{b}\\
	\mathcal{Q}\boldsymbol{\Upsilon}(\tilde{R}) & =W_{\sigma}^{\top}\varphi(\boldsymbol{\Upsilon}(\tilde{R}))+\alpha_{\sigma}
	\end{align*}
	where $\varphi(\boldsymbol{\Upsilon}(\tilde{R}))\in\mathbb{R}^{q\times1}$
	is an activation function, $\Gamma_{c}\in\mathbb{R}^{q\times3}$ is
	a known weighted matrix, $C\in\mathbb{R}^{3\times1}$ is a correction
	weights vector to be adaptively tuned, $W_{\sigma}\in\mathbb{R}^{q\times3}$
	are the unknown NN weights to be adaptively tuned, $q>0$ is an integer
	that denotes the number of neurons, and $\alpha_{b}\in\mathbb{R}$
	and $\alpha_{\sigma}\in\mathbb{R}^{3}$ are the approximated error
	components. Note that $\alpha_{b},||\alpha_{\sigma}||\rightarrow0$
	as $q\rightarrow\infty$. Therefore, the error
	dynamics of the Euclidean distance in \eqref{eq:Attit_dRe_norm} can
	be reformulated as below:
	\begin{align}
	d||\tilde{R}||_{{\rm I}}=\tilde{f}dt+\tilde{g}\mathcal{Q}d\beta= & -\frac{1}{2}(C^{\top}\Gamma_{c}^{\top}\varphi(\boldsymbol{\Upsilon}(\tilde{R}))+\alpha_{b})dt\nonumber \\
	& +\frac{1}{2}(\varphi(\boldsymbol{\Upsilon}(\tilde{R}))^{\top}W_{\sigma}+\alpha_{\sigma}^{\top})d\beta\label{eq:Attit_dRe_normNN}
	\end{align}
	Define $W_{\sigma}$ as an unknown symmetric constant matrix of NN
	weights where $\overline{W}_{\sigma}=W_{\sigma}W_{\sigma}^{\top}\in\mathbb{R}^{q\times q}$.
	Let $\hat{W}_{\sigma}\in\mathbb{R}^{q\times q}$ be the estimate of
	$\overline{W}_{\sigma}$, and the error in NN weights be
	\begin{equation}
	\tilde{W}_{\sigma}=\overline{W}_{\sigma}-\hat{W}_{\sigma}\in\mathbb{R}^{q\times q}\label{eq:Attit_We}
	\end{equation}

	\section{Neural-adaptive-based Stochastic Filter Design\label{sec:Neuro-adaptive-based-Stochastic-}}
	
	In this Section, our objective is to develop a nonlinear stochastic
	filter based on neural-adaptive techniques for the attitude estimation
	problem. Consider the following neural-adaptive-based nonlinear stochastic
	filter design:
	\begin{equation}
	\begin{cases}
	\dot{\hat{R}} & =\hat{R}[\Omega_{m}-C]_{\times}\\
	\dot{\hat{W}}_{\sigma} & =\frac{\psi_{2}}{2}\Gamma_{\sigma}\varphi(\boldsymbol{\Upsilon}(\tilde{R}))\varphi(\boldsymbol{\Upsilon}(\tilde{R}))^{\top}-k_{\sigma}\Gamma_{\sigma}\hat{W}_{\sigma}\\
	C & =\left(\Gamma_{c}^{\top}+\frac{\psi_{2}}{2\psi_{1}}(\Gamma_{c}^{\top}\Gamma_{c})^{-1}\Gamma_{c}^{\top}\hat{W}_{\sigma}\right)\varphi(\boldsymbol{\Upsilon}(\tilde{R}))
	\end{cases}\label{eq:NAV_Filter1_Detailed}
	\end{equation}
	where $k_{\sigma}\in\mathbb{R}$ and $k_{c}\in\mathbb{R}$ are positive
	constants, $\Gamma_{\sigma}\in\mathbb{R}^{q\times q}$ is a positive
	diagonal matrix, $\Gamma_{c}\in\mathbb{R}^{q\times3}$ with $\Gamma_{c}^{\top}\Gamma_{c}$
	being positive definite, $q$ denotes the number of neurons, $\hat{W}_{\sigma}\in\mathbb{R}^{q\times q}$
	is the estimate of $\overline{W}_{\sigma}$, and $\tilde{R}=R_{y}^{\top}\hat{R}$
	with $R_{y}$ being the reconstructed attitude, see QUEST \cite{shuster1981three}
	or SVD \cite{markley1988attitude}. $\boldsymbol{\Upsilon}(\tilde{R})=\mathbf{vex}(\boldsymbol{\mathcal{P}}_{a}(\tilde{R}))$,
	$||\tilde{R}||_{{\rm I}}=\frac{1}{4}{\rm Tr}\{\mathbf{I}_{3}-\tilde{R}\}$,
	$\psi_{1}=\frac{1}{2}(1+||\tilde{R}||_{{\rm I}})\exp(||\tilde{R}||_{{\rm I}})$,
	and $\psi_{2}=\frac{1}{2}(2+||\tilde{R}||_{{\rm I}})\exp(||\tilde{R}||_{{\rm I}})$.
	It is becomes apparent that $\hat{W}_{\sigma}$ is symmetric for $\hat{W}_{\sigma}(0)=\hat{W}_{\sigma}(0)^{\top}$.
	It is worth noting that $\Gamma_{c}$ defines the convergence rate
	of $||\tilde{R}||_{{\rm I}}$ to the neighbourhood of the origin,
	while $\Gamma_{\sigma}$ defines the convergence rate of $\hat{W}_{\sigma}$
	to $\overline{W}_{\sigma}$. 
	\begin{thm}
		\label{thm:Theorem1}Recall the stochastic attitude dynamics in \eqref{eq:Attit_dR_dt}.
		Assume the availability of at least two observations and their respective
		measurements in \eqref{eq:Attit_Vec_yi} at each time instant. Consider
		the nonlinear neural-adaptive stochastic filter in \eqref{eq:NAV_Filter1_Detailed}
		supplied with measurements in \eqref{eq:Attit_Om_m} $\Omega_{m}=\Omega+n$
		and \eqref{eq:Attit_Vec_yi} $y_{i}=R^{\top}r_{i}$ for all $\forall i=1,2,\ldots,N$.
		Hence, for $||\tilde{R}(0)||_{{\rm I}}\neq+1$ (unstable equilibria),
		all the closed-loop errors are SGUUB in the mean square.
	\end{thm}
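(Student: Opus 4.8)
The plan is to build a single Lyapunov function that couples the attitude error with the weight‑estimation error and then to drive it into the dissipation form $\mathcal{L}V\le-\beta V+\eta$ required by Lemma \ref{Lemm:Def_LV_dot}. The shape of $\psi_1$ and $\psi_2$ is the decisive hint: since $\frac{d}{ds}\big(s\,e^{s}\big)=(1+s)e^{s}=2\psi_1$ and $\frac{d^2}{ds^2}\big(s\,e^{s}\big)=(2+s)e^{s}=2\psi_2$ with $s=||\tilde{R}||_{\rm I}$, I would take
\[
V=||\tilde{R}||_{\rm I}\exp(||\tilde{R}||_{\rm I})+\tfrac{1}{2}{\rm Tr}\{\tilde{W}_{\sigma}\Gamma_{\sigma}^{-1}\tilde{W}_{\sigma}^{\top}\},
\]
so that $V_1=2\psi_1$ and $V_2=2\psi_2$ for the attitude part. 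First I would verify the sandwich bound \eqref{eq:Attit_Vfunction_Lyap}: on $||\tilde{R}||_{\rm I}\in[0,1)$ the scalar part is trapped between class‑$\mathcal{K}_\infty$ functions $\underline{\alpha}_1$ and $\overline{\alpha}_2$, and the trace part is positive and radially unbounded in $\tilde{W}_\sigma$ because $\Gamma_\sigma$ is a positive diagonal matrix.

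Next I would apply the generator of Lemma \ref{Lemm:Def_LV_dot} to $V$ along \eqref{eq:Attit_dRe_normNN}. The drift contributes $V_1\tilde{f}=-\psi_1\big(C^{\top}\Gamma_c^{\top}\varphi(\boldsymbol{\Upsilon}(\tilde{R}))+\alpha_b\big)$, while the It\^o second‑order (diffusion) term contributes $\tfrac12{\rm Tr}\{\tilde{g}\mathcal{Q}^2\tilde{g}^{\top}V_2\}=\tfrac{\psi_2}{4}\,||W_\sigma^{\top}\varphi(\boldsymbol{\Upsilon}(\tilde{R}))+\alpha_\sigma||^{2}$, which, after expanding and writing $\overline{W}_\sigma=\hat{W}_\sigma+\tilde{W}_\sigma$, produces the terms $\tfrac{\psi_2}{4}\varphi^{\top}\hat{W}_\sigma\varphi$ and $\tfrac{\psi_2}{4}\varphi^{\top}\tilde{W}_\sigma\varphi$ plus $\alpha_\sigma$‑dependent residuals. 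The whole purpose of the two summands in the correction law $C$ is to dispose of the $\hat{W}_\sigma$ term: substituting $C=(\Gamma_c^{\top}+\tfrac{\psi_2}{2\psi_1}(\Gamma_c^{\top}\Gamma_c)^{-1}\Gamma_c^{\top}\hat{W}_\sigma)\varphi(\boldsymbol{\Upsilon}(\tilde{R}))$ leaves the strictly negative nominal term $-\psi_1\varphi^{\top}\Gamma_c\Gamma_c^{\top}\varphi=-\psi_1||\Gamma_c^{\top}\varphi||^{2}$ together with an $\hat{W}_\sigma$ contribution aimed at the diffusion‑induced $\hat{W}_\sigma$ term. I would then add $\dot{V}_W={\rm Tr}\{\tilde{W}_\sigma\Gamma_\sigma^{-1}\dot{\tilde{W}}_\sigma^{\top}\}$; the proportional part $\tfrac{\psi_2}{2}\Gamma_\sigma\varphi\varphi^{\top}$ of the adaptation law is matched to annihilate the remaining cross term $\tfrac{\psi_2}{4}\varphi^{\top}\tilde{W}_\sigma\varphi$.

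What should then survive is, on the one hand, the $\sigma$‑modification (leakage) term $-k_\sigma{\rm Tr}\{\tilde{W}_\sigma\hat{W}_\sigma\}=-k_\sigma||\tilde{W}_\sigma||_F^{2}+k_\sigma{\rm Tr}\{\tilde{W}_\sigma\overline{W}_\sigma\}$, which supplies negative definiteness in $\tilde{W}_\sigma$ at the cost of a constant (bounded, since $\overline{W}_\sigma$ is constant), and, on the other hand, the approximation residuals $\psi_1\alpha_b$ and the $\alpha_\sigma$ cross terms, which are bounded because $\alpha_b,||\alpha_\sigma||$ are bounded and vanish as $q\to\infty$. I would complete squares / use Young's inequality to absorb the cross terms into the negative quadratics and into a constant $\eta$. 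The stabilizing attitude term is tied back to $V$ by reading the known‑weight reconstruction $\Gamma_c^{\top}\varphi\approx\boldsymbol{\Upsilon}(\tilde{R})$, so that $||\Gamma_c^{\top}\varphi||^{2}$ is comparable to $||\boldsymbol{\Upsilon}(\tilde{R})||^{2}=4(1-||\tilde{R}||_{\rm I})||\tilde{R}||_{\rm I}$ via Lemma \ref{lem:Lem1}; on any set with $||\tilde{R}||_{\rm I}$ bounded away from $1$ this dominates $\beta\,||\tilde{R}||_{\rm I}\exp(||\tilde{R}||_{\rm I})$ for a suitable $\beta>0$. Assembling the pieces yields $\mathcal{L}V\le-\beta V+\eta$, and Lemma \ref{Lemm:Def_LV_dot} then gives $\mathbb{E}[V]\le V(0)\exp(-\beta t)+\eta/\beta$, i.e. all closed‑loop errors are SGUUB in the mean square.

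I expect the main obstacle to be precisely the diffusion‑term bookkeeping in the middle step: matching the $\tfrac{\psi_2}{2\psi_1}$‑weighted summand of $C$ and the $\tfrac{\psi_2}{2}$‑weighted summand of $\dot{\hat{W}}_\sigma$ against the factor‑$\tfrac14$ It\^o term so that the $\hat{W}_\sigma$ and $\tilde{W}_\sigma$ contributions genuinely cancel (rather than leaving a sign‑indefinite remainder, with the projection $\Gamma_c(\Gamma_c^{\top}\Gamma_c)^{-1}\Gamma_c^{\top}$ needing to be controlled). The second delicate point is the semi‑global, \emph{almost} qualification: because $||\boldsymbol{\Upsilon}(\tilde{R})||^{2}$ degenerates both at the target $||\tilde{R}||_{\rm I}=0$ and at the antipodal set $||\tilde{R}||_{\rm I}=1$, the lower bound relating the stabilizing term to $V$ holds only away from $1$, which is exactly why the hypothesis $||\tilde{R}(0)||_{\rm I}\neq1$ (the unstable equilibria) must be imposed.
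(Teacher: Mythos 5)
Your proposal follows essentially the same route as the paper's own proof: the same coupled Lyapunov function $V=\|\tilde{R}\|_{\rm I}\exp(\|\tilde{R}\|_{\rm I})+\tfrac12{\rm Tr}\{\tilde{W}_\sigma^{\top}\Gamma_\sigma^{-1}\tilde{W}_\sigma\}$ (the paper uses a factor $2$ on the scalar part, your normalization is actually the one consistent with $V_1=2\psi_1$, $V_2=2\psi_2$), the same sandwich bounds, the same cancellation of the $\hat{W}_\sigma$ diffusion term by the second summand of $C$ and of the $\tilde{W}_\sigma$ cross term by the adaptation law, the same Young's-inequality absorption of the leakage and residual terms into $\eta$, and the same use of Lemma \ref{lem:Lem1} plus the $\|\tilde{R}(0)\|_{\rm I}\neq 1$ hypothesis to obtain $\mathcal{L}V\le-\beta V+\eta$ and invoke Lemma \ref{Lemm:Def_LV_dot}. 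The two obstacles you flag (the projection $\Gamma_c(\Gamma_c^{\top}\Gamma_c)^{-1}\Gamma_c^{\top}$ in the cancellation, and the degeneration of $\|\boldsymbol{\Upsilon}(\tilde{R})\|^2$ at the antipodal set) are precisely the delicate points of the paper's argument as well, so your plan is a faithful, and in places more careful, reconstruction of it.
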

	\begin{proof}Let $V=V(||\tilde{R}||_{{\rm I}},\tilde{W}_{\sigma})$
		be a Lyapunov function candidate defined as
		\begin{equation}
		V=2||\tilde{R}||_{{\rm I}}\exp(||\tilde{R}||_{{\rm I}})+\frac{1}{2}{\rm Tr}\{\tilde{W}_{\sigma}^{\top}\Gamma_{\sigma}^{-1}\tilde{W}_{\sigma}\}\label{eq:Attit_Lyap}
		\end{equation}
		with the map $V:\mathbb{SO}\left(3\right)\times\mathbb{R}^{q\times q}\rightarrow\mathbb{R}_{+}$.
		Since $\exp(||\tilde{R}||_{{\rm I}})\leq\exp(1)<3$, one obtains
		\begin{align*}
		e^{\top}\underbrace{\left[\begin{array}{cc}
			1 & 0\\
			0 & \frac{1}{2}\underline{\lambda}(\Gamma_{\sigma}^{-1})
			\end{array}\right]}_{H_{1}}e\leq & V\leq e^{\top}\underbrace{\left[\begin{array}{cc}
			3 & 0\\
			0 & \frac{1}{2}\overline{\lambda}(\Gamma_{\sigma}^{-1})
			\end{array}\right]}_{H_{2}}e
		\end{align*}
		such that
		\[
		\underline{\lambda}(H_{1})||e||^{2}\leq V\leq\overline{\lambda}(H_{2})||e||^{2}
		\]
		where $e=[\sqrt{||\tilde{R}||_{{\rm I}}},||\tilde{W}_{\sigma}||_{F}]^{\top}$
		and $\underline{\lambda}(\Gamma_{\sigma}^{-1})$ and $\overline{\lambda}(\Gamma_{\sigma}^{-1})$
		stand for the minimum and the maximum eigenvalues of $\Gamma_{\sigma}^{-1}$,
		respectively. Since both $\underline{\lambda}(\Gamma_{\sigma}^{-1})>0$
		and $\overline{\lambda}(\Gamma_{\sigma}^{-1})>0$, $\underline{\lambda}(H_{1})$
		and $\overline{\lambda}(H_{2})$ are positive and $V(||\tilde{R}||_{{\rm I}},\tilde{W}_{\sigma})>0$
		for all $e\in\mathbb{R}^{2}\backslash\{0\}$. Consequently, on has
		\begin{equation}
		\begin{cases}
		\frac{\partial V}{\partial||\tilde{R}||_{{\rm I}}}= & 2\psi_{1}=2(1+||\tilde{R}||_{{\rm I}})\exp(||\tilde{R}||_{{\rm I}})\\
		\frac{\partial^{2}V}{\partial||\tilde{R}||_{{\rm I}}^{2}}= & 2\psi_{2}=2(2+||\tilde{R}||_{{\rm I}})\exp(||\tilde{R}||_{{\rm I}})
		\end{cases}\label{eq:Attit_Lvv}
		\end{equation}
		In view of \eqref{eq:Attit_Lyap}, \eqref{eq:Attit_Lvv}, and Lemma
		\ref{Lemm:Def_LV_dot}, the following differential operator is obtained:
		\begin{equation}
		\mathcal{L}V=\psi_{1}\tilde{f}+\frac{1}{2}{\rm Tr}\left\{ \tilde{g}\tilde{g}^{\top}\psi_{2}\right\} -{\rm Tr}\{\tilde{W}_{\sigma}^{\top}\Gamma_{\sigma}^{-1}\dot{\hat{W}}_{\sigma}\}\label{eq:Attit_Lyap_1}
		\end{equation}
		From \eqref{eq:NAV_Filter1_Detailed}
		\begin{align}
		& \mathcal{L}V=-\psi_{1}{\rm Tr}\{C\varphi(\boldsymbol{\Upsilon}(\tilde{R}))^{\top}\Gamma_{c}\}+\psi_{1}\alpha_{b}+\label{eq:Attit_Lyap_2}\\
		& \frac{\psi_{2}}{4}{\rm Tr}\{(W_{\sigma}^{\top}\varphi(\boldsymbol{\Upsilon}(\tilde{R}))+\alpha_{\sigma})(W_{\sigma}^{\top}\varphi(\boldsymbol{\Upsilon}(\tilde{R}))+\alpha_{\sigma})^{\top}\}\nonumber \\
		& -{\rm Tr}\{\tilde{W}_{\sigma}^{\top}\Gamma_{\sigma}^{-1}\dot{\hat{W}}_{\sigma}\}\nonumber 
		\end{align}
		According to \textit{Young's inequality,} $\alpha_{\sigma}^{\top}W_{\sigma}^{\top}\varphi(\boldsymbol{\Upsilon}(\tilde{R}))\leq\frac{1}{2}\varphi(\boldsymbol{\Upsilon}(\tilde{R}))^{\top}\overline{W}_{\sigma}\varphi(\boldsymbol{\Upsilon}(\tilde{R}))+\frac{1}{2}||\alpha_{\sigma}||$.
		Therefore, one obtains
		\begin{align}
		& \mathcal{L}V\leq-\psi_{1}{\rm Tr}\{C\varphi(\boldsymbol{\Upsilon}(\tilde{R}))^{\top}\Gamma_{c}\}-{\rm Tr}\{\tilde{W}_{\sigma}^{\top}\Gamma_{\sigma}^{-1}\dot{\hat{W}}_{\sigma}\}\nonumber \\
		& \hspace{1em}+\frac{\psi_{2}}{2}\varphi(\boldsymbol{\Upsilon}(\tilde{R}))\varphi(\boldsymbol{\Upsilon}(\tilde{R}))^{\top}\overline{W}_{\sigma}+\psi_{1}\alpha_{b}+\frac{\psi_{2}}{2}||\alpha_{\sigma}||^{2}\label{eq:Attit_Lyap_3}
		\end{align}
		Note that $\psi_{1}\leq\exp(||\tilde{R}||_{{\rm I}})<3$ and $\psi_{2}\leq3\exp(||\tilde{R}||_{{\rm I}})<9$.
		In view of \eqref{eq:Attit_We}, let us replace $\overline{W}_{\sigma}$
		in \eqref{eq:NAV_Filter1_Detailed} by $\overline{W}_{\sigma}=\tilde{W}_{\sigma}+\hat{W}_{\sigma}$.
		Thus, using $\dot{\hat{W}}_{\sigma}$ and $C$ in \eqref{eq:NAV_Filter1_Detailed},
		the expression \eqref{eq:Attit_Lyap_3} can be reformulated in an
		inequality form as follows:
		\begin{align}
		\mathcal{L}V\leq & -\psi_{1}||\Gamma_{c}^{\top}\varphi(\boldsymbol{\Upsilon}(\tilde{R}))||^{2}-k_{\sigma}||\tilde{W}_{\sigma}||_{F}^{2}\nonumber \\
		& +k_{\sigma}||\tilde{W}_{\sigma}||_{F}||\overline{W}_{\sigma}||_{F}+3\alpha_{b}+\frac{9}{2}||\alpha_{\sigma}||^{2}\label{eq:Attit_Lyap_4}
		\end{align}
		Based on \textit{Young's inequality,} $k_{\sigma}||\tilde{W}_{\sigma}||_{F}||\overline{W}_{\sigma}||_{F}\leq\frac{k_{\sigma}}{2}||\tilde{W}_{\sigma}||_{F}^{2}+\frac{k_{\sigma}}{2}||\overline{W}_{\sigma}||_{F}^{2}$.
		Consider a hyperbolic tangent activation function $\varphi(a)=\frac{\exp(a)-\exp(-a)}{\exp(a)+\exp(-a)}$
		where $a\in\mathbb{R}$. One finds that $4||\Gamma_{c}^{\top}\varphi(\boldsymbol{\Upsilon}(\tilde{R}))||^{2}\geq k_{c}||\boldsymbol{\Upsilon}(\tilde{R})||^{2}$
		where $k_{c}=\underline{\lambda}(\Gamma_{c}^{\top}\Gamma_{c})$. Hence,
		for a hyperbolic tangent activation function one has
		\begin{align}
		\mathcal{L}V\leq & -\frac{k_{c}}{4}||\boldsymbol{\Upsilon}(\tilde{R})||^{2}-\frac{k_{b}}{2}||\tilde{W}_{\sigma}||_{F}^{2}+\eta\label{eq:Attit_Lyap_5}
		\end{align}
		where $\eta=\sup_{t\geq0}\frac{k_{b}}{2}||\overline{W}_{\sigma}||_{F}^{2}+3\alpha_{b}+\frac{9}{2}||\alpha_{\sigma}||^{2}$.
		This shows that $\mathcal{L}V$ is ultimately bounded. Let $\underline{\delta}\geq1-||\tilde{R}(0)||_{{\rm I}}$
		and recall Lemma \ref{lem:Lem1}. Accordingly, one shows
		\begin{align}
		\mathcal{L}V\leq & -e^{\top}\underbrace{\left[\begin{array}{cc}
			\underline{\delta}\,k_{c} & 0\\
			0 & k_{\sigma}
			\end{array}\right]}_{H_{3}}e+\eta\nonumber \\
		\leq & -\underline{\lambda}(H_{3})||e||^{2}+\eta\label{eq:Attit_Lyap_6}
		\end{align}
		where $e=[\sqrt{||\tilde{R}||_{{\rm I}}},||\tilde{W}_{\sigma}||_{F}]^{\top}$.
		Since $k_{\sigma}>0$ and $k_{c}>0$ and given that $||\tilde{R}(0)||_{{\rm I}}$
		does not belong to the unstable equilibria, it becomes apparent that
		$\underline{\lambda}(H_{3})>0$. Hence, $\mathcal{L}V<0$ if
		\[
		||e||^{2}>\frac{\eta}{\underline{\lambda}(H_{3})}
		\]
		Consequently, one finds
		\begin{equation}
		\frac{d\mathbb{E}[V]}{dt}=\mathbb{E}[\mathcal{L}V]\leq-\frac{\underline{\lambda}(H_{3})}{\overline{\lambda}(H_{2})}\mathbb{E}[V]+\eta\label{eq:Attit_Lyap7}
		\end{equation}
		Let us define $\beta=\frac{\underline{\lambda}(H_{3})}{\overline{\lambda}(H_{2})}$.
		Therefore, one obtains
		\begin{align}
		0\leq V(t) & \leq V(0)\exp(-\beta t)+\frac{\eta}{\beta}(1-\exp(-\beta t))\label{eq:Attit_Lyap8}
		\end{align}
		As such, it becomes apparent that $e$ is almost SGUUB which completes
		the proof.\end{proof}
	
	The comprehensive steps of the neural-adaptive stochastic attitude
	filter implementation in its discrete form are listed in Algorithm
	\ref{alg:DiscFilter1} with $\Delta t$ being a small sampling time.
	Singular value decomposition \cite{markley1988attitude} has been
	utilized a method of attitude reconstruction. In Algorithm \ref{alg:DiscFilter1},
	$s_{i}$ denotes $i$th sensor measurement confidence level with $\sum_{i=1}^{N}s_{i}=1$.
	\begin{algorithm}
		\caption{\label{alg:DiscFilter1}Neural-adaptive stochastic attitude estimator}
		
		\textbf{Initialization}:
		\begin{enumerate}
			\item[{\footnotesize{}1:}] Set $\hat{R}[0]=\hat{R}_{0}\in\mathbb{SO}\left(3\right)$, $\hat{W}_{\sigma}[0]=\hat{W}_{\sigma|0}=0_{q\times q}$,
			$q>0$, $s_{i}\geq0$ for all $i\geq2$, select $\Gamma_{\sigma},k_{\sigma}>0$,
			$\underline{\lambda}(\Gamma_{c}^{\top}\Gamma_{c})>0$, and set $k=0$.\vspace{1mm}
		\end{enumerate}
		\textbf{while}
		\begin{enumerate}
			\item[] \textcolor{blue}{/{*} Attitude reconstruction using Singular Value
				Decomposition {*}/}\vspace{1mm}
			\item[{\footnotesize{}2:}] $\begin{cases}
			{\bf r}_{i} & =\frac{r_{i}}{||r_{i}||},\hspace{1em}{\bf y}_{i}=\frac{y_{i}}{||y_{i}||},\hspace{1em}i=1,2,\ldots,N\\
			B & =\sum_{i=1}^{n}s_{i}{\bf y}_{i}{\bf r}_{i}^{\top}=USV^{\top}\\
			U_{+} & =U\cdot diag(1,1,\det(U))\\
			V_{+} & =V\cdot diag(1,1,\det(V))\\
			R_{y} & =V_{+}U_{+}^{\top}
			\end{cases}$\vspace{1mm}
			\item[{\footnotesize{}3:}] $\tilde{R}_{k}=R_{y}^{\top}\hat{R}_{k}$ and $\boldsymbol{\Upsilon}=\boldsymbol{\Upsilon}(\tilde{R}_{k})=\mathbf{vex}(\boldsymbol{\mathcal{P}}_{a}(\tilde{R}))$\vspace{1mm}
			\item[{\footnotesize{}4:}] $\varphi(\boldsymbol{\Upsilon})=\frac{\exp(\boldsymbol{\Upsilon})-\exp(-\boldsymbol{\Upsilon})}{\exp(\boldsymbol{\Upsilon})+\exp(-\boldsymbol{\Upsilon})}$\textcolor{blue}{{}
				/{*} hyperbolic tangent activation function {*}/}\vspace{1mm}
			\item[{\footnotesize{}5:}] $\hat{W}_{\sigma|k}=\hat{W}_{\sigma|k-1}+\Delta t\Gamma_{\sigma}(\psi_{2}\varphi(\boldsymbol{\Upsilon})\varphi(\boldsymbol{\Upsilon})^{\top}-k_{\sigma}\hat{W}_{\sigma|k-1})$\vspace{1mm}
			\item[{\footnotesize{}6:}] $C=\left(\Gamma_{c}^{\top}+\frac{\psi_{2}}{2\psi_{1}}(\Gamma_{c}^{\top}\Gamma_{c})^{-1}\Gamma_{c}^{\top}\hat{W}_{\sigma|k}\right)\varphi(\boldsymbol{\Upsilon})$\vspace{1mm}
			\item[] \textcolor{blue}{/{*} angle-axis parameterization {*}/}
			\item[{\footnotesize{}7:}] $\begin{cases}
			\varrho & =(\Omega_{m|k}-C)\Delta t\\
			\mu & =||\varrho||,\hspace{1em}x=\varrho/||\varrho||\\
			\mathcal{R}_{exp} & =\mathbf{I}_{3}+\sin(\mu)[x]_{\times}+(1-\cos(\mu))[x]_{\times}^{2}
			\end{cases}$ \vspace{1mm}
			\item[{\footnotesize{}8:}] $\hat{R}_{k+1}=\hat{R}_{k}\mathcal{R}_{exp}$\vspace{1mm}
			\item[{\footnotesize{}9:}] $k+1\rightarrow k$
		\end{enumerate}
		\textbf{end while}
	\end{algorithm}

	\section{Simulation Results \label{sec:SE3_Simulations}}
	
	This section illustrates the functionality of the proposed neural-adaptive
	stochastic filter on the Lie group of $\mathbb{SO}\left(3\right)$.
	The discrete filter presented in Algorithm \ref{alg:DiscFilter1}
	has been tested at a sampling rate of $\Delta t=0.01$ seconds. Assume
	that the initial value of $R$ is $R(0)=\mathbf{I}_{3}\in\mathbb{SO}\left(3\right)$
	and the true angular velocity be as below:
	\[
	\Omega=0.6\left[\sin(0.4t),\sin(0.7t+\frac{\pi}{4}),0.4\cos(0.3t)\right]^{\top},\,\text{(rad/sec)}
	\]
	Let the true angular velocity be attached with unknown normally distributed
	random noise $n=\mathcal{N}(0,0.11)$ (rad/sec) (zero mean and standard
	deviation of $0.11$), see \eqref{eq:Attit_Om_m}. Define two observations
	in $\{\mathcal{I}\}$: $r_{1}=[1,-1,1]^{\top}$ and $r_{2}=[0,0,1]^{\top}$.
	Let $\{\mathcal{B}\}$ measurements be corrupted with unknown normally
	distributed random noise $n_{1}=n_{2}=\mathcal{N}(0,0.1)$, see \eqref{eq:Attit_Vec_yi}.
	Let us consider three neurons ($q=3$). Consider selecting the design
	parameters as follows: $\Gamma_{c}=2\mathbf{I}_{3}$, $\Gamma_{\sigma}=2\mathbf{I}_{3}$,
	and $k_{\sigma}=1$. Let the initial estimate of neural network weights
	be set to $\hat{W}(0)=0_{3\times3}$ and the initial estimate of the
	attitude be
	\[
	\hat{R}(0)=\left[\begin{array}{ccc}
	-0.9214 & -0.0103 & 0.3884\\
	0.2753 & -0.7227 & 0.634\\
	0.2742 & 0.6911 & 0.6687
	\end{array}\right]\in\mathbb{SO}\left(3\right)
	\]
	where $||\tilde{R}(0)||_{{\rm I}}=\frac{1}{4}{\rm Tr}\{\mathbf{I}_{3}-R_{0}^{\top}\hat{R}_{0}\}\approx0.994$
	approaching the unstable equilibrium $+1$. As to activation function,
	we selected a hyperbolic tangent activation function:
	\[
	\varphi(\alpha)=\frac{\exp(\alpha)-\exp(-\alpha)}{\exp(\alpha)+\exp(-\alpha)},\hspace{1em}\alpha\in\mathbb{R}
	\]
	
	Fig. \ref{fig:Fig_Noise} illustrates the high level of noise corrupting
	the angular velocity measurements in comparison to the true data.
	In Fig. \ref{fig:Fig_Euler}, the estimated Euler angles (roll ($\hat{\phi}$),
	pitch ($\hat{\theta}$), and yaw ($\hat{\psi}$)) are plotted against
	the true Euler angles ($\phi$, $\theta$, $\psi$). Fig. \ref{fig:Fig_Euler}
	demonstrates fast and strong tracking capability of the proposed approach.
	The effectiveness and robustness of the neural-adaptive approach are
	illustrated in Fig. \ref{fig:Fig_ERR} where the error initiates at
	a large value and rapidly reaches close neighborhood of the origin.
	Table \ref{tab:SO3_1} shows statistical analysis of mean and standard
	deviation (std) of the steady-state error values between 5 to 29 seconds
	with respect to the number of neurons. As illustrated by Table \ref{tab:SO3_1},
	greater number of neurons results in improved steady-state error convergence.
	Finally, Fig. \ref{fig:Fig_West} depicts the boundedness of the neural-adaptive
	estimates as they converge close to zero as $||\tilde{R}||_{{\rm I}}\rightarrow0$.
	
	\begin{figure}[h]
		\centering{}\includegraphics[scale=0.32]{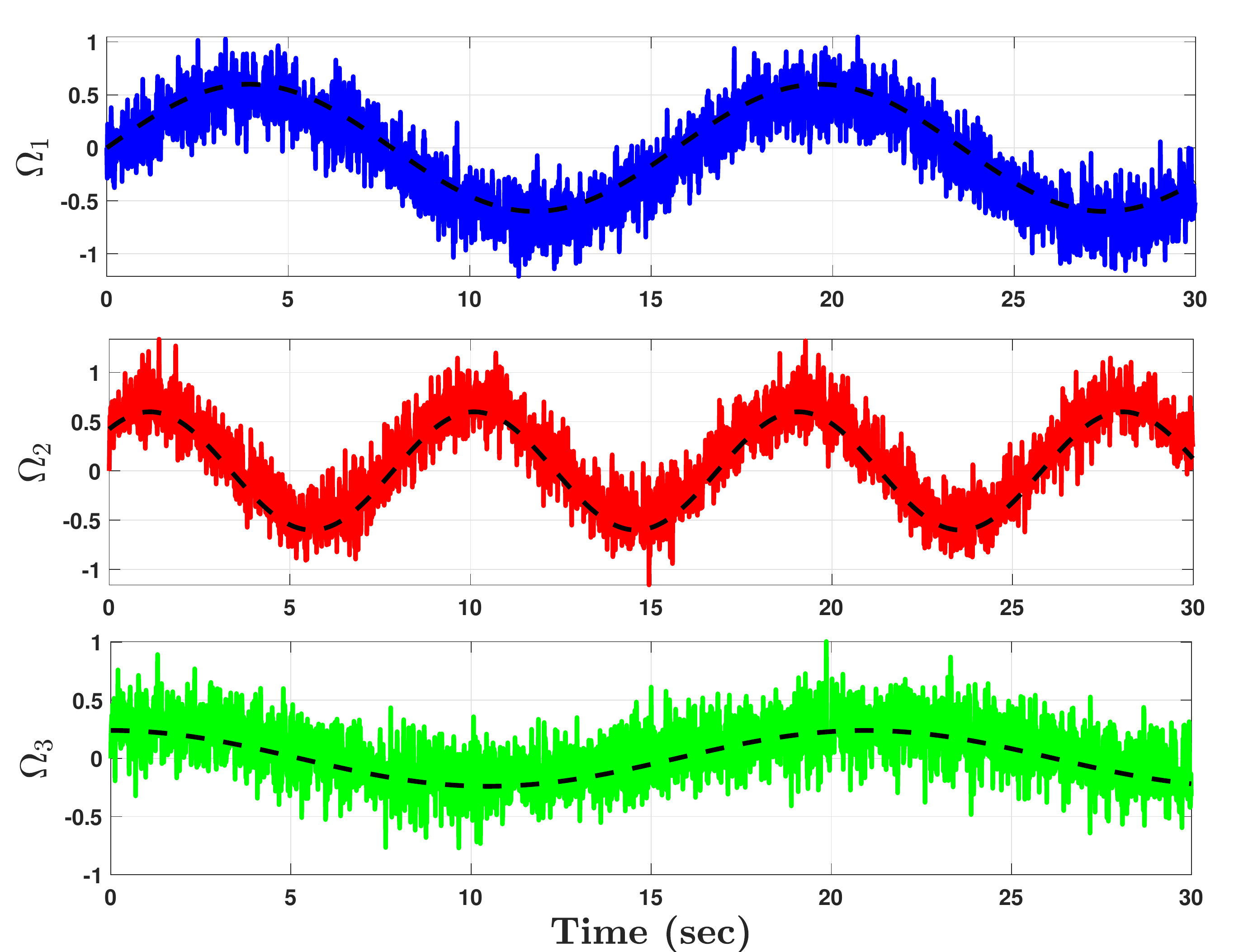}\caption{Rate gyro: True (black center-line) and measurements (colored)}
		\label{fig:Fig_Noise}
	\end{figure}
	
	\begin{figure}[h]
		\centering{}\includegraphics[scale=0.3]{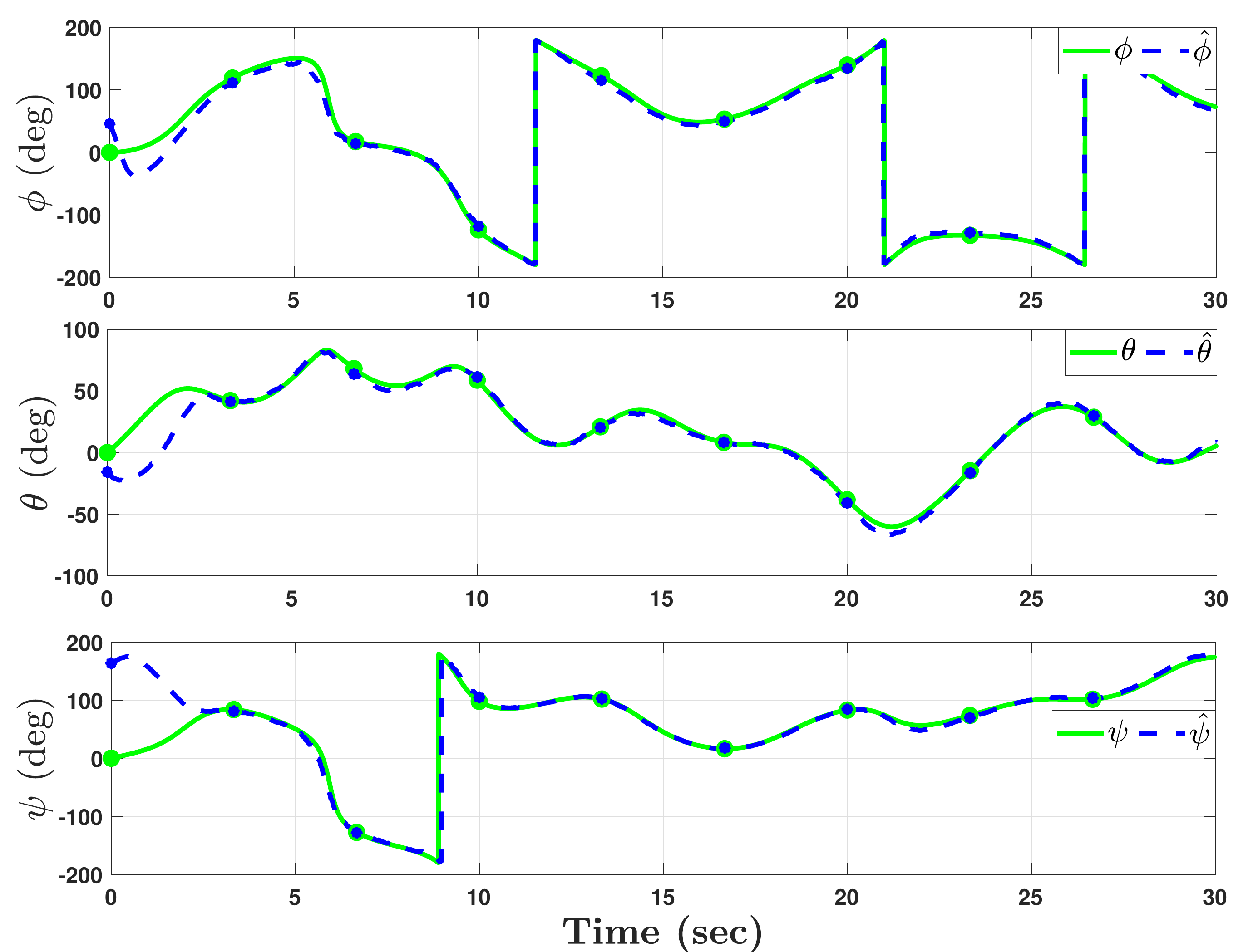}\caption{Euler angles: True (green solid-line) and estimated (blue dash-line)
			using 3 neurons}
		\label{fig:Fig_Euler}
	\end{figure}
	
	\begin{figure}[h]
		\centering{}\includegraphics[scale=0.26]{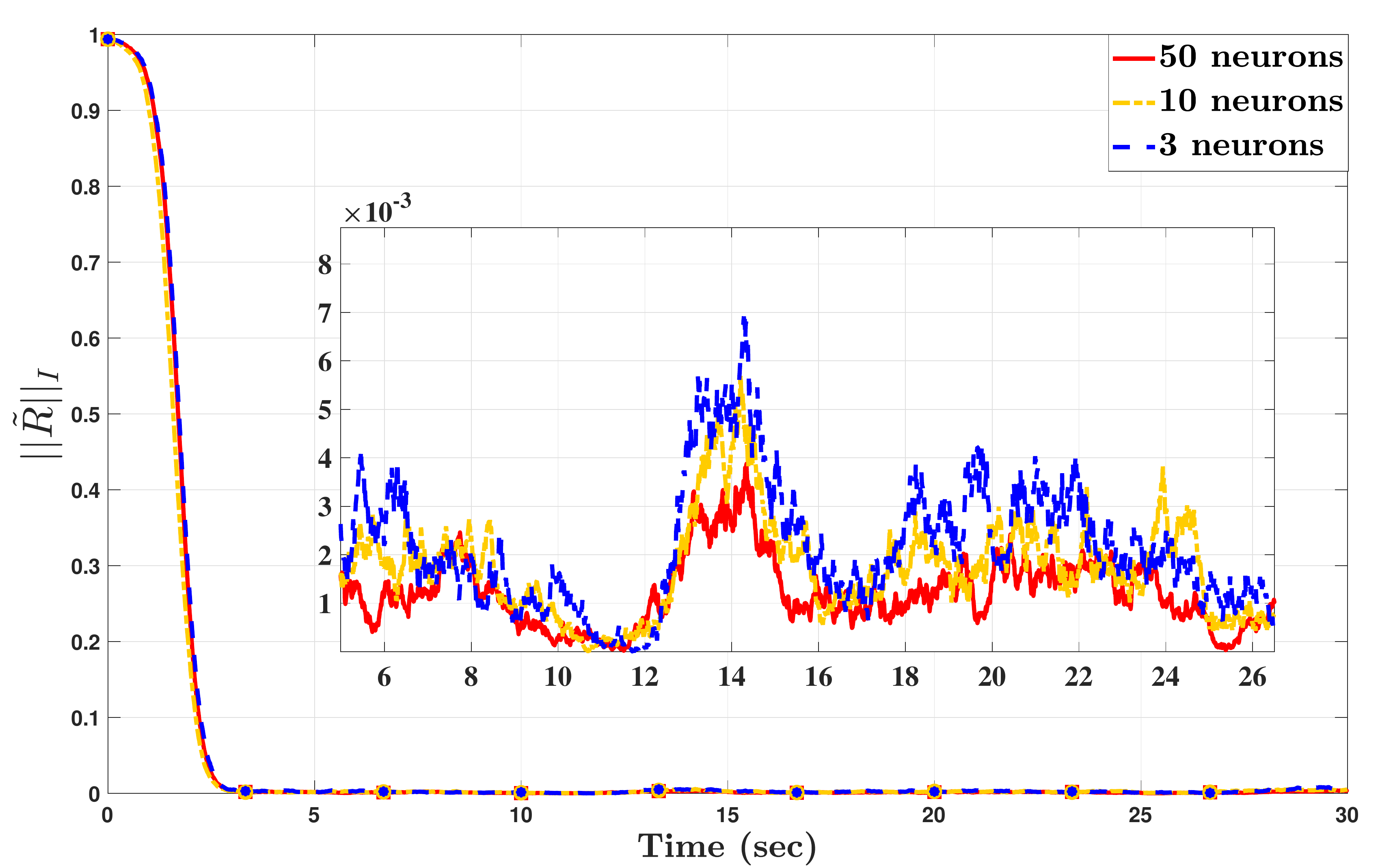}\caption{Normalized Euclidean error $||\tilde{R}||_{{\rm I}}=\frac{1}{4}{\rm Tr}\{\mathbf{I}_{3}-R_{k}^{\top}\hat{R}_{k}\}$.}
		\label{fig:Fig_ERR}
	\end{figure}
	
	\begin{table}[H]
		\caption{\label{tab:SO3_1}Statistical analysis of the steady-state error with
			respect to the number of neurons.}
		
		\centering{}%
		\begin{tabular}{c|>{\centering}p{1.8cm}|>{\centering}p{1.8cm}|>{\centering}p{1.8cm}}
			\hline 
			\noalign{\vskip\doublerulesep}
			\multicolumn{4}{c}{Output data of $||\tilde{R}||_{{\rm I}}=\frac{1}{4}{\rm Tr}\{\mathbf{I}_{3}-R_{k}^{\top}\hat{R}_{k}\}$
				over the period (5-29 sec)}\tabularnewline[\doublerulesep]
			\hline 
			\hline 
			\noalign{\vskip\doublerulesep}
			Neurons number & 3 & 10 & 50\tabularnewline[\doublerulesep]
			\hline 
			\noalign{\vskip\doublerulesep}
			Mean & $2.3\times10^{-3}$ & $2\times10^{-3}$ & $1.4\times10^{-3}$\tabularnewline[\doublerulesep]
			\hline 
			\noalign{\vskip\doublerulesep}
			STD & $1.9\times10^{-3}$ & $1.4\times10^{-3}$ & $9\times10^{-4}$\tabularnewline[\doublerulesep]
			\hline 
		\end{tabular}
	\end{table}
	
	\begin{figure}[h]
		\centering{}\includegraphics[scale=0.32]{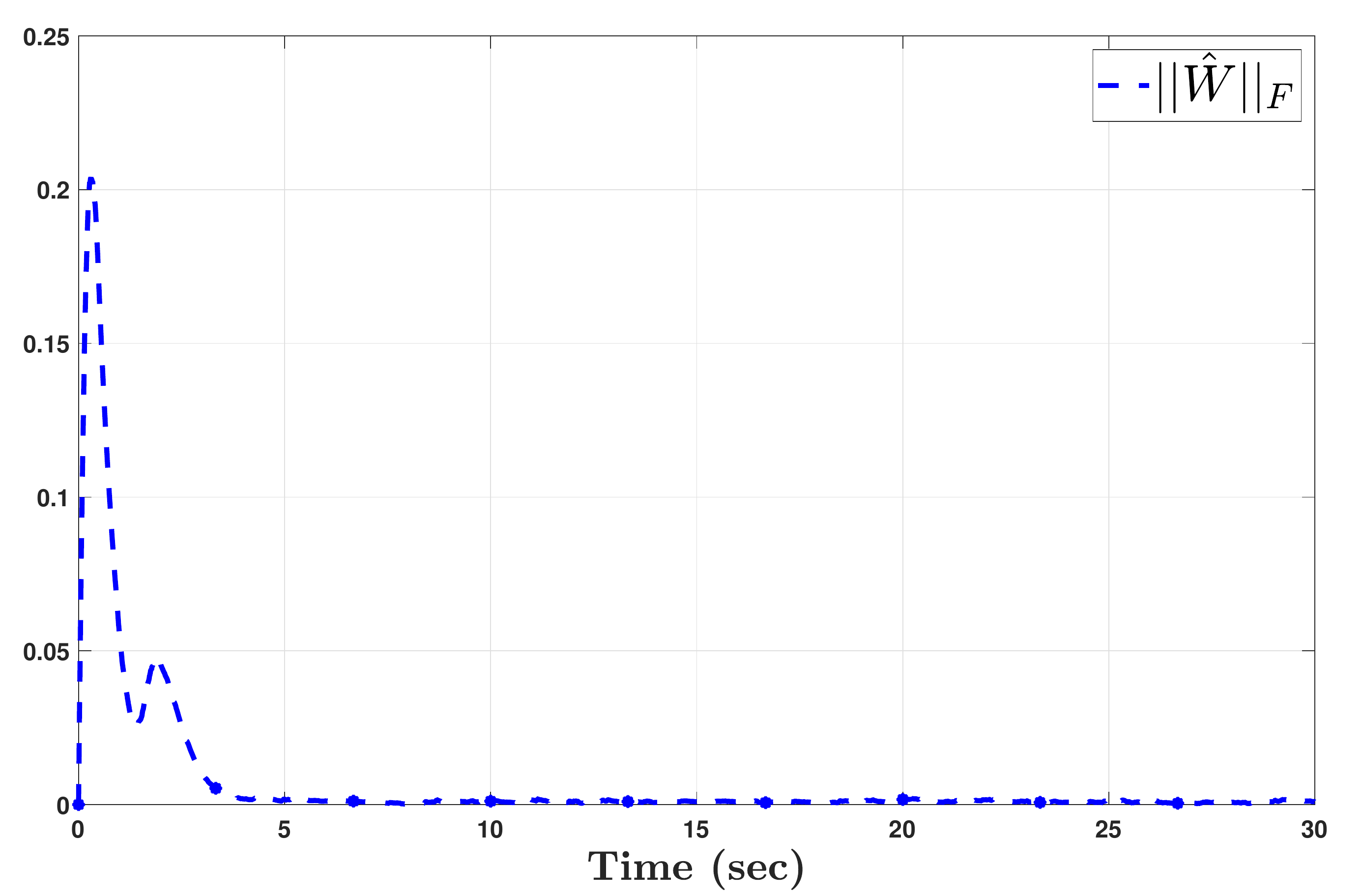}\caption{Frobenius norm of neural-adaptive estimates (3 neurons).}
		\label{fig:Fig_West}
	\end{figure}

	\section{Conclusion \label{sec:SE3_Conclusion}}
	
	Accurate attitude estimation is a fundamental component of successful
	robotic applications. The estimation can be achieved using a group
	of observations and measurements. Accurate estimation become challenging
	when low-cost measurement units are utilized. This work addressed
	the attitude estimation problem using a neural-adaptive stochastic
	filter on the Special Orthogonal Group $\mathbb{SO}(3)$. The novel
	filter accounts for the noise present in the gyroscope measurements.
	The proposed filter is ensured to be almost SGUUB in the mean square.
	The numerical simulation illustrates robustness and rapid adaptability
	of the proposed neural-adaptive approach. 
	
	\section*{Acknowledgment}
	
	The author would like to thank \textbf{Maria Shaposhnikova} for proofreading
	the article.

	\subsection*{Appendix\label{subsec:Appendix-A}}
	\begin{center}
		\textbf{Neural-adaptive Filter Quaternion Representation}
		\par\end{center}
	
	\noindent Let $\mathbb{S}^{3}=\{\left.Q\in\mathbb{R}^{4}\right|||Q||=\sqrt{q_{0}^{2}+q^{\top}q}=1\}$
	and let $Q=[q_{0},q^{\top}]^{\top}\in\mathbb{S}^{3}$ be a unit-quaternion
	vector with $q_{0}\in\mathbb{R}$ and $q\in\mathbb{R}^{3}$. Let $Q^{-1}=[\begin{array}{cc}
	q_{0} & -q^{\top}\end{array}]^{\top}\in\mathbb{S}^{3}$ be the inverse of $Q\in\mathbb{S}^{3}$. Consider $\odot$ to be
	a quaternion product. Then, for $Q_{1}=[\begin{array}{cc}
	q_{01} & q_{1}^{\top}\end{array}]^{\top}\in\mathbb{S}^{3}$ and $Q_{2}=[\begin{array}{cc}
	q_{02} & q_{2}^{\top}\end{array}]^{\top}\in\mathbb{S}^{3}$, one has
	\[
	Q_{1}\odot Q_{2}=\left[\begin{array}{c}
	q_{01}q_{02}-q_{1}^{\top}q_{2}\\
	q_{01}q_{2}+q_{02}q_{1}+[q_{1}]_{\times}q_{2}
	\end{array}\right]
	\]
	$\mathbb{S}^{3}$ can be mapped to $\mathbb{SO}\left(3\right)$ as
	below \cite{hashim2019AtiitudeSurvey,shuster1993survey}
	\begin{align}
	\mathcal{R}_{Q} & =(q_{0}^{2}-||q||^{2})\mathbf{I}_{3}+2qq^{\top}+2q_{0}\left[q\right]_{\times}\in\mathbb{SO}\left(3\right)\label{eq:NAV_Append_SO3}
	\end{align}
	Let $Q_{y}$ be the reconstructed attitude, obtained for instance,
	using QUEST \cite{shuster1981three}. Define $\hat{Q}=[\hat{q}_{0},\hat{q}^{\top}]^{\top}\in\mathbb{S}^{3}$
	as the estimate of $Q=[q_{0},q^{\top}]^{\top}\in\mathbb{S}^{3}$,
	and let the error in estimation be $\tilde{Q}=Q_{y}^{-1}\odot\hat{Q}=[\tilde{q}_{0},\tilde{q}^{\top}]^{\top}\in\mathbb{S}^{3}$.
	The quaternion representation of the neural-adaptive stochastic attitude
	filter in \eqref{eq:NAV_Filter1_Detailed} is as below:
	\begin{equation}
	\begin{cases}
	\dot{\hat{W}}_{\sigma} & =\frac{\psi_{2}}{2}\Gamma_{\sigma}\varphi(2\tilde{q}_{0}\tilde{q})\varphi(2\tilde{q}_{0}\tilde{q})^{\top}-k_{\sigma}\Gamma_{\sigma}\hat{W}_{\sigma}\\
	C & =\left(\Gamma_{c}^{\top}+\frac{\psi_{2}}{2\psi_{1}}(\Gamma_{c}^{\top}\Gamma_{c})^{-1}\Gamma_{c}^{\top}\hat{W}_{\sigma}\right)\varphi(2\tilde{q}_{0}\tilde{q})\\
	u & =\Omega_{m}-C\\
	\Phi & =\left[\begin{array}{cc}
	0 & -u^{\top}\\
	u & -[u]_{\times}
	\end{array}\right]\\
	\dot{\hat{Q}} & =\frac{1}{2}\Phi\hat{Q}
	\end{cases}\label{eq:Quat}
	\end{equation}
	where $\boldsymbol{\Upsilon}(\tilde{\mathcal{R}}_{Q})=2\tilde{q}_{0}\tilde{q}$,
	$||\tilde{\mathcal{R}}_{Q}||_{{\rm I}}=1-\tilde{q}_{0}^{2}$, $\psi_{1}=\frac{1}{2}(1+||\tilde{\mathcal{R}}_{Q}||_{{\rm I}})\exp(||\tilde{\mathcal{R}}_{Q}||_{{\rm I}})$,
	and $\psi_{2}=\frac{1}{2}(2+||\tilde{\mathcal{R}}_{Q}||_{{\rm I}})\exp(||\tilde{\mathcal{R}}_{Q}||_{{\rm I}})$.

	\balance
	\bibliographystyle{IEEEtran}
	\bibliography{bib_Neuro_Attit}
\end{document}